\newtheorem{definition}{Definition}[section]
\newtheorem{thm}{Theorem}[section]
\newtheorem{cor}[thm]{Corollary}
\journal{arXiv.org}
\begin{document}
\begin{frontmatter}

\title{A Comparative Study of the Signal-to-Noise Ratios of Different Representations for Symbolic Sequences}
\author{Jiasong Wang$^{1}$, Chuangyin Dang$^{2}$, Changchuan Yin$^{3,\ast}$}
\address{1.Department of Mathematics, Nanjing University, Nanjing 210093, China\\
2.Department of System Engineering and Engineering Management, City University of Hong Kong, Hong Kong\\
3.College of Natural Sciences, University of Phoenix, Chicago, IL 60173, USA\\
$\ast$ Corresponding author, Email: cyinbox@email.phoenix.edu
}
\begin{abstract}
Based on the numerical representations by T basic vectors of a symbolic sequence consisting of T symbols, first, we prove mathematical that the total Fourier spectrum of the sequence is the square of the length of the sequence. In the meantime, we define the indicator sequences vector. Using the orthogonal or row orthogonal transformations of the indicator sequences vector, we construct some special numerical representations of the symbolic sequence and characterize the signal-to-noise ratios of the power spectrum of the numerical representations. After calculating the discrete Fourier transform of those special numerical representations, the signal-to-noise ratios of them can be figured out. Mathematical theorems prove that the signal-to-noise ratio of the Fourier spectrum of those special representations of the symbolic sequence is $T/(T-1)$ times the signal-to-noise ratio of the representation by T base vectors. The results are applied in analyzing the properties of the DNA sequences or protein sequences in the frequency domain, if one uses the signal-to-noise ratios of special representations as the distinguishing criterion, the distinguishing results only depend upon the distribution of the symbols in the symbolic sequence and their mathematical constructions of representations, but do not relate to the chemical or biological meanings of the representations.
\end{abstract}
\begin{keyword}
Genome\sep Symbolic sequence\sep Indicator sequences vector\sep Orthogonal transform\sep Discrete Fourier transforms\sep Signal-to-noise ratio

\end{keyword}
\end{frontmatter}


%
\section{Introduction}
\label{Introduction}
In the life science, a DNA sequence is a symbolic sequence of four alphabets, A, G, T, and C. A protein sequence is also a symbolic sequence constructed by 20 amino acids residues. In addition, there are other different symbolic sequences in many real world problems. To investigate the properties of those symbolic sequences by signal processing methods is a new challenge. Because a symbolic sequence does not have underlying algebraic structure, such as group structure,  symbolic signals cannot be directly processed with existing signal processing algorithms designed for signals having values that are elements of a field or a group\citep{wang2002computing}. For the purpose of studying the properties of symbolic sequence, we need to represent the symbolic sequence as a numerical sequence. Since last seventies, many researchers have suggested different numerical and graphical representations for the DNA sequences\citep{A2001}. At the fundamental level, digital signals are composed from the symbolic sequences by the use of indicator sequences. The other typical numerical representations of DNA sequences include the Voss\citep{VOSS1993}, tetrahedron\citep{SILVERMAN1986}, complex numbers \citep{A2001}, the Z-curve representations \citep{ZHANG1994,ZHANG1998}, and non-degeneracy graphical methods\citep{YAU2003, WANG2006}. Different literatures use different numerical representations, but the reasons and advantages of selecting one representation over others have not been discussed. One of important methods to analyze the symbolics sequences in signal processing is discrete Fourier transform (DFT). Bio-scientists have being studied the properties of DNA sequence or protein sequence by DFT since last nineties\citep{Tiwari1997,A2001,Dodin2000,Kotlar2003,YIN2007}.Using DFT to investigate the properties of  the symbolic sequence, a fundamental distinguishing criterion is the signal-to-noise ratio defined by DFT for the numerical representation of the symbolic sequence. We have found that the signal-to-noise ratios of some specially numerical representations for a symbolic sequence are the same, the theoretical results will be applied to the biological molecule's research.

This paper is organized as follows. It presents the numerical representations for a symbolic sequence consisting of T symbols in general, define the signal-to-noise ratio of the representation and give the theoretical proof of a formula for the signal-to-noise ratio of the numerical representation by T base vectors. Some specially numerical representations are constructed and the properties of their signal-to-noise ratios will be discussed. The interesting conclusion proved mathematically can be applied to the research of the features of a DNA sequence or a protein sequence and can also be tenable for other symbolic sequences. Because the signal-to-noise ratio for several kinds of representations of a symbolic sequence are the same, the ratio only depends upon the distribution of the symbols in the symbolic sequences and the mathematical construction of the representations.

\section{System and Methods}
\subsection{Numerical representations, DFT and the signal-to-noise ratio of a symbolic sequence }
A symbolic sequence, $s$, consists of elements $s_0 ,s_1 ,......,s_{m - 1} $, where each symbol $s_j $ belongs to a finite alphabet set $\Gamma$.  For example, $\Gamma  = \left\{ {A_1 ,A_2 , \ldots ,A_T } \right\}$, i.e., set $\Gamma$  formed by $T$  letters $A_1 ,A_2 ,......,A_T $ and $s_j $ is one of $A_1 ,A_2 ,......,A_T $.\\
A fundamental way to represent a symbolic sequence as a numerical sequence is to assign $A_t $ as a dimension $T$  base vector, $\overrightarrow{V_t }  = (0, \ldots ,1,0, \ldots ,0)^T$ whose $t$-th entrance is one, other entrances are zeros. So, they are T unit vectors of T dimension, which is one-to-one mapping with the T letters. \\
\begin{definition}
 To complete numerical representation of the symbolic sequence s we define T indicator sequences respect to the symbolic sequence $s$ as follows\\
 $$
u_{A_t } (j) = \left\{ \begin{array}{l}
 1,\;s_j  = A_t , \\
 0,{\rm{otherwise}}. \\
 \end{array} \right.\;j = 0,1, \ldots ,m - 1
$$
\end{definition}

It is obvious that a binary sequence  $u_{A_t } (j)$ with length m describes the distribution of $A_t $  in the sequence s. The symbolic sequence can be represented by the following numerical sequence.\\
$$\sum\limits_{t = 1}^T {u_{A_t } (0)\overrightarrow {V_t } } ,\sum\limits_{t = 1}^T {u_{A_t } (1)\overrightarrow {V_t } } , \ldots ,\sum\limits_{t = 1}^T {u_{A_t } (m-1) \overrightarrow {V_t } }$$
\\
\begin{definition}
 The DFT of  $u_{A_t }$   is defined as \\
$$
U_{A_t } (k) = \sum\limits_{j = 0}^{m - 1} {u_{A_t}(j)} e^{ - i2\pi kj/m} ,j = 0,1,......,m - 1
$$
\end{definition}
\begin{definition}
Using the numerical representation in Definition 2.1, the DFT of symbolic sequence $s$ is defined as\\
$$S(k) = \sum\limits_{t = 1}^T {U_{A_t }(k)} \overrightarrow {V_t } ,k = 0,1,......,m - 1$$
\end{definition}
\begin{definition}
The Fourier spectrum at frequency $k$ is defined as\\
$$P(k) = S^* (k)S(k) = \sum\limits_{t = 1}^T {U_{A_t}^2 (k)} ,k = 0,1, \ldots ,m - 1$$
for the representation of the symbolic sequence by T base vectors.
\end{definition}
\begin{definition}
The total spectrum of the symbolic sequence is defined as\\
$$P = \sum\limits_{k = 0}^{m - 1} {P(k)}$$
\end{definition}
The average Fourier spectrum E  is the total spectrum P divided by the length of the sequence, i.e. $E = P/m$. Considering frequency $k=0$ is the trivial case, then, we have the following definition:\\

\begin{definition} The signal-to-noise ratio of the symbolic sequence at the frequency $k \ne 0$ , $SNR(k)$, is defined as its the Fourier spectrum at frequency $k$ divided by its average Fourier spectrum\\
$$
SNR(k) = \sum\limits_{t = 1}^T {U_{A_t} ^2 (k)} /E,k = 1,2,.......,m - 1
$$
for the representation of the symbolic sequence by T base vectors.
\end{definition}

\begin{thm}
The total spectrum of the symbolic sequence, $P = \sum\limits_{k = 0}^{m - 1} {P(k)} $,corresponding to the representation 2.1 is ${m^{2}}$, i.e.,
$$
\sum\limits_{k = 0}^{m - 1} {\sum\limits_{t = 1}^T {U_{A_t}^2 (k} } ) = m^2
$$
\end{thm}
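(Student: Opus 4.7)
The plan is to reduce the double sum on the left-hand side to a simple count using Parseval's identity applied to each indicator sequence separately, and then combine the per-symbol counts via the fact that the indicator sequences partition the positions of $s$.

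First, I would fix a single symbol $A_t$ and treat $u_{A_t}(\cdot)$ as an ordinary numerical sequence of length $m$ with values in $\{0,1\}$. Applying Parseval's identity to the unnormalized DFT convention of Definition~2.2 gives
$$\sum_{k=0}^{m-1} |U_{A_t}(k)|^2 \;=\; m \sum_{j=0}^{m-1} |u_{A_t}(j)|^2.$$
Because $u_{A_t}(j) \in \{0,1\}$, we have $|u_{A_t}(j)|^2 = u_{A_t}(j)$, so the right-hand side equals $m \, n_t$, where $n_t$ denotes the number of positions $j$ at which $s_j = A_t$.

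Next, I would sum this identity over $t = 1, 2, \ldots, T$. Since every position $j \in \{0, 1, \ldots, m-1\}$ contains exactly one of the symbols $A_1, \ldots, A_T$, the partition identity $\sum_{t=1}^T n_t = m$ holds, giving
$$\sum_{k=0}^{m-1} \sum_{t=1}^{T} |U_{A_t}(k)|^2 \;=\; m \sum_{t=1}^{T} n_t \;=\; m \cdot m \;=\; m^2,$$
which is the conclusion of the theorem (interpreting the notation $U_{A_t}^2(k)$ in Definition~2.4 as the modulus squared $U_{A_t}^*(k)\,U_{A_t}(k)$, consistent with $P(k) = S^*(k)S(k)$).

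I expect no serious obstacle. The only point requiring mild care is to pin down Parseval's identity under the unnormalized DFT convention of Definition~2.2, which introduces the factor $m$ on the time-domain side. If a self-contained derivation is preferred over invoking Parseval, one can instead expand $|U_{A_t}(k)|^2$ as a double sum over $j,j'$, interchange the $k$-summation with the $j,j'$-summation, and apply the root-of-unity orthogonality relation $\sum_{k=0}^{m-1} e^{-i 2\pi k(j - j')/m} = m\,\delta_{j,j'}$; the diagonal $j=j'$ contributions then recover $m n_t$ and the rest of the argument proceeds as above.
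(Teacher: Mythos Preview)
Your proposal is correct and essentially matches the paper's proof. The paper carries out precisely the ``self-contained derivation'' you describe as an alternative---expanding $|U_{A_t}(k)|^2$ as a double sum over indices in $S_{A_t}=\{j:u_{A_t}(j)=1\}$, summing over $k$, and using the root-of-unity orthogonality relation to kill the off-diagonal terms, leaving $m|S_{A_t}|$; your invocation of Parseval's identity simply names this computation rather than performing it inline.
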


\begin{proof}
Suppose $S_{A_t } $  is the set of the indices for $u_{A_t}(j ) = 1,j = 0,1,......,m - 1,t = 1,2,......,T$,it is clear that $\bigcup\limits_{t = 1}^T {S_{A_t } }  = \{ 0.1,......,m - 1\} $, $\sum\limits_{t = 1}^T {\left| {S_{A_t } } \right|}  = m$ and $S_{A_i } \bigcap {S_{A_l } }  = 0,i \ne l \in \{ 1,2,......,T\} $\\
Since
$$
\begin{array}{l}
 U_{A_t} ^2 (k) = (\sum\limits_{j = 1}^{m - 1} {u_{A_t}(j)} e^{ - i2\pi jk/m} )^* (\sum\limits_{l = 1}^{m - 1} {u_{A_t}(l)e^{ - 2\pi lk/m} } ) \\
  = \sum\limits_{j\in S_{A_t } } {e^{2\pi kj/m} } \sum\limits_{l \in S_{A_t } } {e^{ - 2\pi kl/m} }  \\
  = \left| {S_{A_t } } \right| + \sum\limits_{j \ne l \in S_{A_t } } {e^{ - 2\pi (l - j)k/m} }  \\
 \end{array}
$$
Then
$$\begin{array}{l}
 \sum\limits_{k = 0}^{m - 1} {U_k ^2 (A_t )}  = m\left| {S_{A_t } } \right| + \sum\limits_{k = 0}^{m - 1} {\sum\limits_{j \ne l \in S_{A_t } } {e^{ - 2\pi (l - j)k/m} } }  \\
  = m\left| {S_{A_t } } \right| + \sum\limits_{j \ne l \in S_{A_t } } {\sum\limits_{k = 0}^{m - 1} {e^{ - 2\pi (l - j)k/m} } }  \\
 \end{array}$$
\\
Due to ${l}\neq{j}$
$$
\sum\limits_{k = 0}^{m - 1} {e^{ - 2\pi (l - j)k/m} }  = \frac{{1 - e^{ - 2\pi (l - j)m/m} }}{{1 - e^{ - 2\pi (l - j)/m} }} = 0
$$
we may see\\
$$
\sum\limits_{k = 0}^{m - 1} {U_{A_t} ^2 (k)}  = m\left| {S_{A_t } } \right|
$$
 Therefore,\\
 $$
 \begin{array}{l}
 P = \sum\limits_{k = 0}^{m - 1} {\sum\limits_{t = 1}^T {U_{A_t}^2 (k)} } ) \\
  = \sum\limits_{t = 1}^T {\sum\limits_{k = 0}^{m - 1} {U_{A_t} ^2 } } (k) \\
  = \sum\limits_{t = 1}^T {m\left| {S_{A_t}}\right|}  \\
  = m^2  \\
 \end{array}
 $$
\end{proof}
We have the following corollary.\\
\begin{cor}\label{3.1}
Using T base vectors to represent a symbolic sequence, the signal-to-noise ratio of the symbolic sequence is
$$
SNR(k) = \sum\limits_{t = 1}^T {U_{A_t} ^2 (k)} /m,k = 1,2, \ldots ,m - 1
$$
\end{cor}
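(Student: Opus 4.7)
The plan is to derive the corollary directly by substituting the value of the average Fourier spectrum obtained from Theorem~3.1 into the general definition of the signal-to-noise ratio given in Definition~2.6. Because Theorem~3.1 has already done the substantive Fourier calculation, nothing further is needed beyond simple arithmetic; the corollary is essentially a bookkeeping consequence.

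Concretely, I would first recall that Definition~2.5 sets $E = P/m$, where $P$ is the total Fourier spectrum, and that Definition~2.6 expresses the SNR at a nonzero frequency $k$ as
$$
SNR(k) \;=\; \sum_{t=1}^{T} U_{A_t}^{2}(k)\,/\,E.
$$
Next, I would invoke Theorem~3.1, which establishes $P = m^{2}$ for the representation by the $T$ base vectors $\overrightarrow{V_t}$. Dividing by $m$ yields $E = m$, the average Fourier spectrum across all $m$ frequency bins.

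Finally, substituting $E = m$ into the formula from Definition~2.6 gives
$$
SNR(k) \;=\; \sum_{t=1}^{T} U_{A_t}^{2}(k)\,/\,m,\qquad k = 1,2,\ldots,m-1,
$$
which is exactly the stated conclusion. No step in this chain is an obstacle: the heavy lifting lives in Theorem~3.1, where the orthogonality of the complex exponentials collapses the cross-terms and forces $P=m^{2}$. The corollary merely packages that result in the SNR language that will be used later when comparing the base-vector representation to the orthogonally transformed representations.
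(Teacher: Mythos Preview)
Your proposal is correct and matches the paper's approach: the paper states the corollary immediately after Theorem~2.1 with no separate proof, treating it as the obvious consequence of substituting $P=m^{2}$ (hence $E=m$) into Definition~2.6, which is exactly what you do.
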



\subsection{A kind of specially numerical representations and its SNR(k)}
It is clear that the indicator sequences are a kind of important representations of the distributions for the symbols in the symbolic sequence, which describes the meaningful features of the symbolic sequence. Using the indicator sequences of a symbolic sequence, we define a vector $(u_{A_1}(j),u_{A_2}(j)........u_{A_T}(j))^t,j = 0,1,......,m - 1 $ as indicator sequences vector, which is used to construct new representations of the symbolic sequence. We introduce a kind of specially numerical representations and investigate the properties of its SNR(k). The special representation is defined as follow\\
$$\left( \begin{array}{l}
 \Delta x_1 (j) \\
 \Delta x_2 (j) \\
 . \\
 . \\
 \Delta x_{T - 1} (j) \\
 1 \\
 \end{array} \right) = D\left( \begin{array}{l}
 u_{A_1}(j) \\
 u_{A_2}(j) \\
 . \\
 . \\
 . \\
 u_{A_T}(j) \\
 \end{array} \right)
$$
Where\\
$$
\begin{array}{l}
 D = \left( {\begin{array}{*{20}c}
   {a_{11} } & {a_{12} } & . & {a_{1T} }  \\
   {a_{21} } & {a_{22} } & . & {a_{2T} }  \\
   \begin{array}{l}
 . \\
 . \\
 a_{T - 11}  \\
 \end{array} & \begin{array}{l}
 . \\
 . \\
 a_{T - 12}  \\
 \end{array} & \begin{array}{l}
 . \\
 . \\
 . \\
 \end{array} & \begin{array}{l}
 . \\
 . \\
 a_{T - 1T}  \\
 \end{array}  \\
   1 & 1 & . & 1  \\
\end{array}} \right) \\
 {\rm{and}}\;j = 0,1, \ldots ,m - 1 \\
 \end{array}
$$

in which the rows of matrix $D$ are orthogonal each other and the norms of  first to the $(T-1)th$ row are the same constant d. We named the special matrix D is a row orthogonal matrix. It is evident that after normalizing the every row in the row orthogonal matrix it will obtain a diagonal matrix times a orthogonal matrix, i.e.

$$
\begin{array}{l}
 D = \left( {\begin{array}{*{20}c}
   d & 0 & . & 0  \\
   0 & d & . & 0  \\
   \begin{array}{l}
 . \\
 . \\
 0 \\
 \end{array} & \begin{array}{l}
 . \\
 . \\
 0 \\
 \end{array} & \begin{array}{l}
 . \\
 . \\
 . \\
 \end{array} & \begin{array}{l}
 . \\
 . \\
 0 \\
 \end{array}  \\
   0 & 0 & . & {\sqrt T }  \\
\end{array}} \right).D_1  \\
 {\rm{and}}\; \\
 D_1  = \left( {\begin{array}{*{20}c}
   {d_{11} } & {d_{12} } & . & {d_{1T} }  \\
   {d_{21} } & {d_{22} } & . & {d_{2T} }  \\
   \begin{array}{l}
 . \\
 . \\
 d_{T - 11}  \\
 \end{array} & \begin{array}{l}
 . \\
 . \\
 d_{T - 12}  \\
 \end{array} & \begin{array}{l}
 . \\
 . \\
 . \\
 \end{array} & \begin{array}{l}
 . \\
 . \\
 d_{T - 1T}  \\
 \end{array}  \\
   {1/\sqrt T } & {1/\sqrt T } & . & {1/\sqrt T }  \\
\end{array}} \right) \\
 \end{array}
$$
i.e., $D1$ is an orthogonal matrix. It is known that the orthogonal matrix $D1$ corresponds to a orthogonal transform. We use the first $(T-1)$ rows of the orthogonal matrix $D1$    transforms the indicator sequences vector, to construct a $(T-1)$ dimension ( $(T-1)-D$ ) numerical representation of the symbolic sequence, i.e.
$$
\begin{array}{l}
 \left( \begin{array}{l}
 \Delta x_1 (j) \\
 \Delta x_2 (j) \\
 . \\
 . \\
 \Delta x_{T - 1} (j) \\
 \end{array} \right) =  \\
 \left( {\begin{array}{*{20}c}
   {d_{11} } & {d_{12} } & . & {d_{1T} }  \\
   {d_{21} } & {d_{22} } & . & {d_{2T} }  \\
   \begin{array}{l}
 . \\
 . \\
 d_{T - 11}  \\
 \end{array} & \begin{array}{l}
 . \\
 . \\
 d_{T - 12}  \\
 \end{array} & \begin{array}{l}
 . \\
 . \\
 . \\
 \end{array} & \begin{array}{l}
 . \\
 . \\
 d_{T - 1T}  \\
 \end{array}  \\
   {} & {} & {} & {}  \\
\end{array}} \right).\left( \begin{array}{l}
 u_{A_1}(j) \\
 u_{A_2}(j) \\
 . \\
 . \\
 . \\
 u_{A_T} (j) \\
 \end{array} \right) \\
 \end{array}
$$

\begin{thm}
The signal-to-noise ratio of the representation constructed by the first $(T-1)$ rows of a orthogonal matrix and the indicator sequences vector is $T/(T-1)$ times the signal-to-noise ratio of the representation by $T$ base vectors.\\
\end{thm}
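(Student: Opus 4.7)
The plan is to lift the matrix identity into the frequency domain via linearity of the DFT, exploit the orthogonality of $D_1$ to convert a sum of squared magnitudes of the new coordinates into a sum of squared magnitudes of the indicator DFTs, and then isolate the contribution of the discarded last row — which, I expect, collapses onto the trivial frequency $k=0$.

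First I would denote by $\Delta X_i(k)$ the DFT of $\Delta x_i(j)$ for $i=1,\ldots,T-1$, and by linearity write $\Delta X_i(k)=\sum_{t=1}^{T} d_{it}\,U_{A_t}(k)$. Appending the "discarded" $T$-th coordinate $\Delta X_T(k)=\tfrac{1}{\sqrt{T}}\sum_{t=1}^{T} U_{A_t}(k)$ gives a vector $V(k)=D_1\,U(k)$, where $U(k)=(U_{A_1}(k),\ldots,U_{A_T}(k))^{\mathrm t}$. Since $D_1$ is orthogonal, $|V(k)|^2=|U(k)|^2$, so
$$\sum_{i=1}^{T-1}|\Delta X_i(k)|^2 \;=\; \sum_{t=1}^{T}U_{A_t}^2(k)\;-\;|\Delta X_T(k)|^2.$$
The key observation is that $\sum_{t=1}^{T}u_{A_t}(j)=1$ for every $j$ (each position carries exactly one symbol), so $\sum_{t=1}^{T}U_{A_t}(k)=\sum_{j=0}^{m-1}e^{-i2\pi kj/m}$, which equals $m$ at $k=0$ and vanishes for all other $k$. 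Consequently $|\Delta X_T(0)|^2=m^2/T$ and $|\Delta X_T(k)|^2=0$ for $k\neq 0$; in particular, for $k\neq 0$ the power spectrum of the new representation equals the old one, $\sum_{i=1}^{T-1}|\Delta X_i(k)|^2=\sum_{t=1}^{T}U_{A_t}^2(k)$.

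Next I would compute the total spectrum $P'$ of the new representation by summing the displayed orthogonality identity over $k$. Theorem~2.1 gives $\sum_{k}\sum_{t}U_{A_t}^2(k)=m^2$, while $\sum_k|\Delta X_T(k)|^2=m^2/T$, so $P'=m^2(T-1)/T$ and therefore the new average spectrum is $E'=P'/m=m(T-1)/T$. Comparing with $E=m$ from Corollary~\ref{3.1} and using that the numerator of the SNR at $k\neq 0$ is unchanged,
$$SNR'(k)\;=\;\frac{\sum_{i=1}^{T-1}|\Delta X_i(k)|^2}{E'}\;=\;\frac{\sum_{t=1}^{T}U_{A_t}^2(k)}{m(T-1)/T}\;=\;\frac{T}{T-1}\,SNR(k),$$
which is the desired identity.

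The only real obstacle is recognizing why dropping exactly that last row — rather than losing information about the non-trivial frequencies — just removes the DC component. This hinges on two structural facts lining up: the last row of $D_1$ is forced to be proportional to $(1,\ldots,1)$ by orthogonality to the first $T-1$ rows (together with the constant-norm condition assumed on those rows), and in the frequency domain the all-ones direction is precisely the $k=0$ mode of the indicator sequences because the indicators partition unity. Once this alignment is stated clearly, the rest is Parseval-style bookkeeping and a direct substitution into the definition of SNR.
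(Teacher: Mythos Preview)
Your argument is correct and follows essentially the same path as the paper: both establish that the per-frequency spectrum at $k\neq 0$ is unchanged (because $\sum_t U_{A_t}(k)=0$ there) and that the total spectrum shrinks by the factor $(T-1)/T$, which together give the $T/(T-1)$ scaling of the SNR. The only difference is organizational---you append the $T$-th coordinate, use that $D_1$ is an isometry on $\mathbb{C}^T$, and subtract $|\Delta X_T(k)|^2$, whereas the paper expands $\sum_{l=1}^{T-1}|\Delta X_l(k)|^2$ directly via the column identities $\sum_{l<T} d_{lt}^2=(T-1)/T$, $\sum_{l<T} d_{li}d_{lj}=-1/T$, and computes the total spectrum by Parseval in the time domain rather than by summing your displayed identity and invoking Theorem~2.1.
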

\begin{proof}
In order to figure out the signal-to-noise ratio of the $(T-1)-D$ representation, we calculate the spectrum of the representation at frequency  $k\neq0$first.\\
The DFT of $\Delta x_l (j),j = 0,1,2,.....,m - 1,$ is $\Delta X_l (k) = \sum\limits_{t = 1}^T {d_{lt} U_{A_t}(k)} ,l = 1,2,.....,T - 1$, where is $U_{A_t}(k)$ the DFT of indicator sequence, at frequency ${k\neq0}$. Therefore, the spectrum of the representation at frequency k  is
$$
\begin{array}{l}
 P(k) = \sum\limits_{l = 1}^{T - 1} {\Delta X_l ^* (k)\Delta X_l (k)}  \\
  = \sum\limits_{l = 1}^{T - 1} {(\sum\limits_{t = 1}^T {d_{lt} U_{A_t} (k)} )^* (\sum\limits_{t = 1}^T {d_{lt} U_{A_t}(k)} )}  \\
  = \sum\limits_{l = 1}^{T - 1} {[\sum\limits_{t = 1}^T {d_{lt} ^2 U_{A_t} ^2 (k) + } } 2\sum\limits_{i > j = 1}^{T - 1} {d_{lj} d_{li} {\mathop{\rm Re}\nolimits} al(} U_{A_j } (k)^* U_{A_i } (k))] \\
  = \sum\limits_{t = 1}^T {\sum\limits_{l = 1}^{T - 1} {d_{lt} ^2 U_{A_t} ^2 (k ) + } } \sum\limits_{i > j = 1}^{T - 1} 2\sum\limits_{l = 1}^{T - 1} {d_{lj} d_{li} {\mathop{\rm Re}\nolimits} al(} U_{A_j } (k)^* U_{A_i } (k)) \\
 \end{array}
$$
where $Real()$ is to take real part of a complex number. Notice $D_1$ is a orthogonal matrix, it will hold that $\sum\limits_{l = 1}^{T - 1} {d^2 _{lj} }  = (T - 1)/T,j = 1,2,......,T$ and $$\begin{array}{l}
 \sum\limits_{l = 1}^{T - 1} {d_{lj} d_{li} }  =  - 1/T\\
 {\rm{where}}\;i > j,i = 2,3,...,T,j = 1,2,...,T - 1 \\
 \end{array}$$ \\
 so $P(k)$ is transformed as\\
 $$
\begin{array}{l}
 P(k) = (T - 1)/T\sum\limits_{t = 1}^T {U_{A_t} ^2 (k)}  - 2/T\sum\limits_{i > j = 1}^{T - 1} {{\mathop{\rm Re}\nolimits} al(U_{A_j } (k)^* U_{A_i } } (k) \\
  = \sum\limits_{t = 1}^T {U^2 _{A_t } (k)}  - \frac{1}{T}(\sum\limits_{t = 1}^T {U_{A_t } (k)} )^2 \\
  = \sum\limits_{t = 1}^T {U_{A_t} ^2 (k )} ,k = 1,2, \ldots ,m - 1, \\
 \end{array}
$$
because the DFT of $\sum\limits_{t = 1}^T {u_{A_t} (j)} $ is zero.\\

The total spectrum of the $(T-1)-D$  representation is $P = \sum\limits_{l = 1}^{T - 1} {\sum\limits_{k = 0}^{m - 1} {\Delta X_l ^2 (k)} }. $  We know that the inverse DFT is %
$\Delta x_l (j) = 1/m\sum\limits_{k = 1}^{m - 1} {\Delta X_l (k)e^{i2\pi kj/m} } $
 and according to the Parseval's theorem.
 $$\begin{array}{l}
 \sum\limits_{j = 0}^{m - 1} {\Delta x_l ^2 } (j) = 1/m\sum\limits_{k = 0}^{m - 1} {\Delta X_l ^2 (k)} ,{\rm{i}}{\rm{.e}}{\rm{.,}} \\
 \sum\limits_{k = 1}^{m - 1} {\Delta X_l ^2 (k)}  = m\sum\limits_{j = 0}^{m - 1} {\Delta x_l ^2 } (j) = m\sum\limits_{j = 0}^{m - 1} {(\sum\limits_{t = 1}^T {d_{lt} u_{A_t} (j )} } )^2  \\
 \end{array}
 $$
 Therefore\\
 $$
 \begin{array}{l}
 P = \sum\limits_{l = 1}^{T - 1} {\sum\limits_{k = 0}^{m - 1} {\Delta X_l ^2 (k)} }  \\
  = m\sum\limits_{l = 1}^{T - 1} {\sum\limits_{j = 0}^{m - 1} {\sum\limits_{t = 1}^T {d_{lt} ^2 u_{A_t} ^2 (j)} } }  \\
 \end{array}
 $$
 Notice $\sum\limits_{l = 1}^{T - 1} {d^2 _{lt} }  = (T - 1)/T,t = 1,2,......,T$, the total spectrum is
 $$
 \begin{array}{l}
 P = m(T - 1)/T\sum\limits_{j = 0}^{m - 1} {\sum\limits_{t = 1}^T {u_{A_t} ^2 (j)} }  \\
  = m^2 (T - 1)/T \\
 \end{array}
 $$
 and the average spectrum $P/m = m(T - 1)/T$. From the above analysis, we obtain the signal-to-noise ratio at frequency k for the $(T-1)-D$  representation as follows
 $$
 SNR(k) = \frac{T}{{m(T - 1)}}\sum\limits_{t = 1}^T {U_{A_t} ^2 (k)} ,k = 1,2,......,m - 1
 $$
\end{proof}

Notice the difference between the row orthogonal matrix and orthogonal matrix only a diagonal matrix, whose diagonal elements are a same constant, the proof is obvious.
\begin{cor}
Using the first $(T-1)$ rows of row orthogonal matrix and indicator sequences vector to construct a numerical representation similarly $(T-1)-D$ representation for a symbolic sequence, its signal-to-noise ratio is $T/(T-1)$ times the signal-to-noise ratio of the representation by $T$ base vectors also.
\end{cor}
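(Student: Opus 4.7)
The plan is to reduce the corollary to Theorem~3.2 directly, exploiting the factorization $D = \Lambda D_1$ that the paper has already exhibited, where $\Lambda$ is diagonal with entries $d, d, \ldots, d, \sqrt{T}$ and $D_1$ is a genuine orthogonal matrix. Since the corollary concerns only the first $T-1$ rows of $D$, the relevant portion of $\Lambda$ is the scalar $d$ times the identity, so the new representation differs from the $(T-1)$-$D$ representation of Theorem~3.2 purely by a uniform scalar factor $d$.

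More concretely, the first step is to let $\Delta y_l(j) = \sum_{t=1}^{T} d_{lt}\, u_{A_t}(j)$ denote the $(T-1)$-$D$ representation associated with the orthogonal matrix $D_1$ (the object whose SNR was computed in Theorem~3.2), and let $\Delta x_l(j) = \sum_{t=1}^{T} a_{lt}\, u_{A_t}(j)$ denote the representation built from the first $T-1$ rows of the row-orthogonal matrix $D$. The factorization $D = \Lambda D_1$ combined with $\Lambda_{ll} = d$ for $l = 1, \ldots, T-1$ yields $a_{lt} = d\, d_{lt}$, hence $\Delta x_l(j) = d\, \Delta y_l(j)$ for every $l$ and every $j$.

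Next, I would invoke linearity of the DFT to get $\Delta X_l(k) = d\, \Delta Y_l(k)$ for all $k$, so the Fourier spectrum at frequency $k \ne 0$ satisfies
\[
P_{\Delta X}(k) = \sum_{l=1}^{T-1} \Delta X_l^*(k)\, \Delta X_l(k) = d^2 \sum_{l=1}^{T-1} \Delta Y_l^*(k)\, \Delta Y_l(k) = d^2\, P_{\Delta Y}(k),
\]
and, summing over $k$, the total spectrum satisfies $P_{\Delta X} = d^2 P_{\Delta Y}$. Consequently the average spectrum is also scaled by $d^2$, and the factor cancels in the SNR ratio:
\[
\mathrm{SNR}_{\Delta X}(k) = \frac{P_{\Delta X}(k)}{P_{\Delta X}/m} = \frac{d^2 P_{\Delta Y}(k)}{d^2 P_{\Delta Y}/m} = \mathrm{SNR}_{\Delta Y}(k).
\]
By Theorem~3.2, the right-hand side equals $T/(T-1)$ times the SNR of the representation by $T$ base vectors, which is precisely the claim.

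There is no substantial obstacle here; this is essentially a rescaling argument. The only point that needs a moment of care is verifying that the scalar multiplier really is a single constant $d$ across all $T-1$ rows we keep, rather than row-dependent; this is exactly the content of the hypothesis that the first $T-1$ rows of $D$ share the same norm $d$, which is what makes $\Lambda$ a scalar multiple of the identity on the relevant block and thereby commute trivially with the SNR ratio.
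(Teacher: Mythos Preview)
Your proposal is correct and is essentially the same approach as the paper's: the paper's entire proof is the one-sentence remark that the row orthogonal matrix differs from the orthogonal matrix only by a diagonal matrix with a common constant on the relevant block, so ``the proof is obvious.'' You have simply spelled out that rescaling argument in full, showing explicitly that the common factor $d^{2}$ multiplies both the spectrum at each frequency and the total spectrum and hence cancels in the SNR.
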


The conclusions of theorem 2.1 and corollary 2.4 suggest that by T base vectors to represent a symbolic sequence, one may use Fourier spectrum instead of the signal-to-noise ratio of the symbolic sequence, because the difference of the two quantities is a constant factor. The substitute may save the computational cost.

\section{Applications}
We apply the theorem 2.3 and its corollary 2.4. to analyzing the DNA sequence, a special case, the number of symbols T  is equal four. The fundament representation, the nucleotides one to one denoted by 4-D base vectors in a DNA sequence, is known as Voss representation\citep{VOSS1993}. In addition, for the analyzing the features of a DNA sequence a 3-D representation, Z-transformation representation, with chemical and /or biological meanings is following\citep{ZHANG1998}.

For any DNA sequence, we define $f_{a}$  is the number of  nucleotide $\alpha$ in the DNA sequence, length of n, $\alpha  \in \{ A,T,C,G\} $. For $f_{A},f_{G},f_{T}$ and $f_{C}$ ,set\\
$$
\begin{array}{l}
 x_1 (n) = f_A  + f_G  - f_C  - f_T , \\
 x_2 (n) = f_A  + f_C  - f_G  - f_T , \\
 x_3 (n) = f_A  + f_T  - f_C  - f_G  \\
 \end{array}
 $$
 define also that\\
 $$
\begin{array}{l}
 \Delta x_1 (n) = x_1 (n) - x_1 (n - 1), \\
 \Delta x_2 (n) = x_2 (n) - x_2 (n - 1), \\
 \Delta x_3 (n) = x_3 (n) - x_3 (n - 1) \\
 \end{array}
$$
where $\Delta x_1 ( - 1) = \Delta x_2 ( - 1) = \Delta x_3 ( - 1) = 0$, and $n=1,2,...,m-1$..  $\Delta x_1 (n),\Delta x_2 (n),\;{\rm{and}}\;\Delta x_3 (n)$ can only have the value 1 or $- 1$.  $\Delta x_1 (n)$ is equal to 1 when the $nth$ nucleotide is A or G (purine), or $- 1$ when the  $nth$ nucleotide is C or T (pyrimidine); $\Delta x_2 (n)$ is equal to 1 when the $nth$ nucleotide is A or C (amino-type),or $- 1$ when the $nth$ nucleotide is G or T (keto-type); $\Delta x_3 (n)$  is equal to 1 when the $nth$ nucleotide is A or T (weak hydrogen bond), or $- 1$when the $nth$ nucleotide is G or C (strong hydrogen bond). Therefore, a DNA sequence can be decomposed three series of digital signals, consisting of $1$ or $- 1$, each of which has clear chemical and/or biological meaning. The first series of digital signal  represents the distribution of the nucleotides of the purine/pyrimidine along the DNA sequence. The second series of digital signal represents the distribution of the bases of the amino/keto types along the DNA sequence. Similarly, the third series of digital signal   represents the distribution of the bases of the strong/weak hydrogen bonds along the DNA sequence. Notice the fact that the Z-transformation representation \citep{ZHANG1998} can be rewritten as
$$\begin{array}{l}
 \left( \begin{array}{l}
 \Delta x_1 (j) \\
 \Delta x_2 (j) \\
 \Delta x_3 (j) \\
 1 \\
 \end{array} \right) = D_z \left( \begin{array}{l}
 u_A (j) \\
 u_C (j) \\
 u_G (j) \\
 u_T (j) \\
 \end{array} \right) \\
 j = 0,1,2, \ldots ,m - 1, \\
 \end{array}$$
where   $$D_z  = \left( {\begin{array}{*{20}c}
   1 & { - 1} & 1 & { - 1}  \\
   1 & 1 & { - 1} & { - 1}  \\
   1 & { - 1} & { - 1} & 1  \\
   1 & 1 & 1 & 1  \\
\end{array}} \right)$$  is a row orthogonal matrix and its first three rows with the same norm.

It is well known that the four base vectors in Voss representation \citep{VOSS1993} formed a simplex in 4-D space and the simplex in 3-D space is a tetrahedron\citep{SILVERMAN1986}. If we choose four vertices of the tetrahedron in 3-D space to represent the four nucleotides, we may construct a new 3-D numerical representation of the DNA sequence
$$\left( \begin{array}{l}
 \Delta x(j) \\
 \Delta y(j) \\
 \Delta z(j) \\
 1 \\
 \end{array} \right) = D_5 \left( \begin{array}{l}
 u_A (j) \\
 u_T (j) \\
 u_C (j) \\
 u_G (j) \\
 \end{array} \right)
$$

where
$$
D_5 = \left( {\begin{array}{*{20}c}
   0 & {\frac{{2\sqrt 2 }}{3}} & { - \frac{{\sqrt 2 }}{3}} & { - \frac{{\sqrt 2 }}{3}}  \\
   0 & 0 & {\frac{{\sqrt 6 }}{3}} & { - \frac{{\sqrt 6 }}{3}}  \\
   1 & { - \frac{1}{3}} & { - \frac{1}{3}} & { - \frac{1}{3}}  \\
   1 & 1 & 1 & 1  \\
\end{array}} \right)
$$
is a row orthogonal matrix and the norms of its first three rows are the same also.It is clear that the two numerical representations derived from the indicator sequences vector have very different backgrounds, the Z-formation representation with chemical or biological meanings but the tetrahedron representation \citep{SILVERMAN1986} only depends on the mathematical concept. But the signal-to-noise ratios of the two representations are the same, 4/3 times the Voss representation.

To verify the theorems experimentally, we have compute the Fourier power spectra a DNA sequence, when the Voss and Z-curve maps are employed. The DNA sequence in the experiment is the exon regions of protein F56F11.4 isoform in the \emph{C.elegans} (GenBankID:$NM_171086$, length 1236 bp). The full DNA sequence with exons and introns of this gene has been used as a benchmark by many researchers to predict protein coding regions\citep{anastassiou2001genomic,yin2007prediction,jiang2008coding}. The DNA spectra shown in Figure 1 display a pronounced peak for the 3-periodicity for both Voss and Z-curve methods. The Figure 1 demonstrates that the Fourier spectra from the two different mappings preserve an equivalency by constant scale factor. The computational results are in table 1. The computational results shows the two methods generate the same DFT spectrum. The total spectra of the Voss mapping equals to square of the length, the counterpart of the Z-Curve mapping is three times square of the length. The 3-periodicity signal to the background noise ratio of the 4-D binary indicator representation is 12.7670 and the counterpart of Z-Curve is 17.0227. The ratio of the SNR of the Voss vs Z-Curve method is 4/3. The computation results are in agreement with he mathematical analysis.\\
\begin{figure}
\centering
\begin{tabular}{c}
\includegraphics[width = 3.5in]{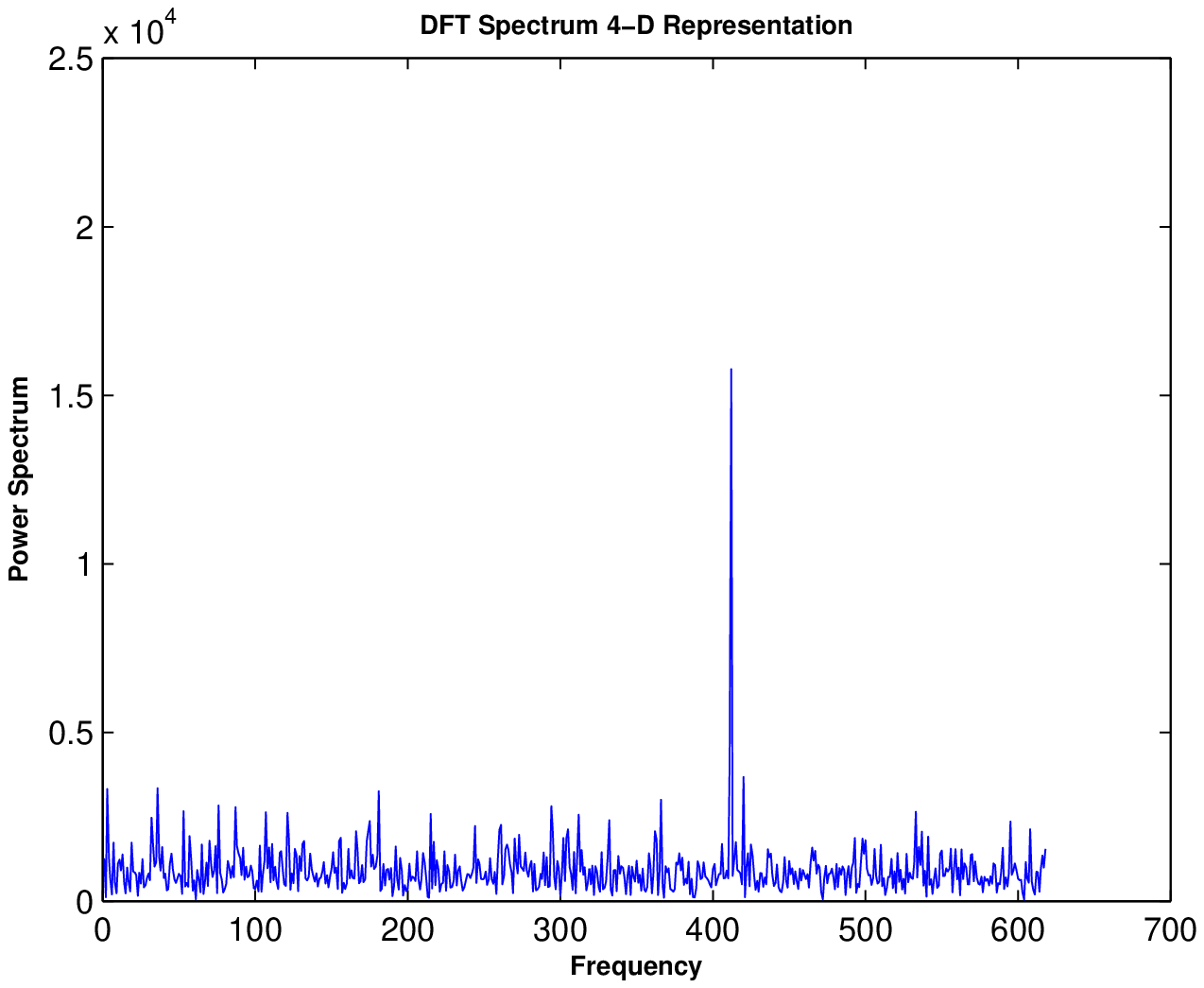} \\
\includegraphics[width = 3.5in]{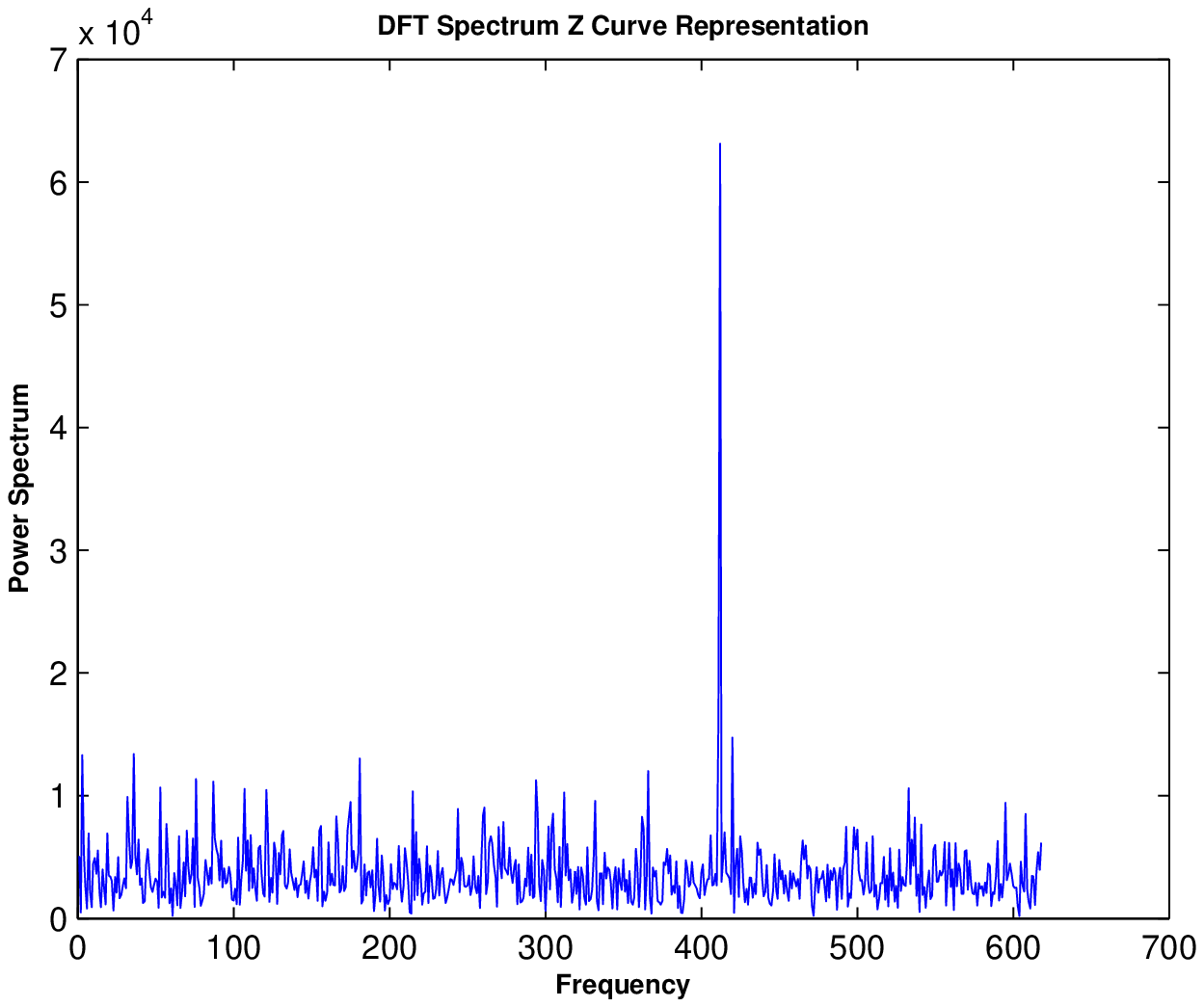} \\
\end{tabular}
\caption{DFT Spectrum of exon regions of F56F11.4 isoform in the \emph{C.elegans} (GenBankID:$NM_171086$, length 1236 bp), upper: 4-D Voss representation, low: Z-Curve representation}
\end{figure}

\begin{table}[ht]
\caption{Comparison of Fourier Spectra of the Voss and Z-Curve representations of F56F11.4 } 
\centering 
\begin{tabular}{l l l l} 
\hline\hline 
Method & Voss & Z-Curve  \\ [0.5ex] 
\hline 
    Length (bp) &  1236 & 1236 \\
    Total Spectra & 152770 & 458310 \\
    Mean Noise &  1236 & 3708 \\
    3-Periodicity & 15780 &  63120  \\
    SNR & 12.7670 &  17.0227  \\
    \hline 
\end{tabular}
\label{table:nonlin} 
\end{table}

To study the properties of a protein sequence by the method of spectrum analyzing the protein numerical encoding is fundamental step. The amino acids in protein sequence can be one-to-one mapped to 20 base vectors in 20-D space, a protein sequence then can be represented as a 20-D vector sequence. As for the Z-transform representation and tetrahedron representation, we use the row orthogonal matrix and indicator sequences vector to construct 19-D representations for the protein sequence. It is evident that the distinguishing criterions of the spectrum analyzing, their signal-to noise ratios, are $20/19$ times the one of the representations by 20 base vectors. The signal-to-noise ratios do not relate the chemical or biological properties of the numerical representations and only depend upon the mathematical construction.

\section{Conclusion}
This paper proves  mathematically first that the total spectrum of the symbolic sequence $P = \sum\limits_{k = 0}^{m - 1} {P(k)}  = m^2$  corresponding to the representation of symbolic sequence by T base vectors, i.e. the average Fourier spectrum of the symbolic sequence is the length of the sequence. Therefore, one may use Fourier spectrum instead of the signal-to-noise ratio of the symbolic sequence which may be helpful to simplify the calculation of the signal-to noise ratio.

The main contribution to the quantity of signal-to-noise ratio is the Fourier spectra of the sequences of symbols distribution in the symbolic sequence. By first $(T - 1)$ row of orthogonal matrix or row orthogonal matrix transforming the indicator sequences vector of a symbolic sequence constructs a numerical representation of the symbolic sequence, the signal-to-noise ratio of the numerical representation by T base vectors can be increased fold. It is known the increase of signal-to-noise ratio is helpful the recognition of short exon sequence in DNA sequence or the spectrum analysis of protein sequence. Noticing the applications of the theorem 2.3 and its corollary 2.4 to analyzing the DNA sequence, the signal-to-noise ratios for the Z-transformation representation and the tetrahedron representation are the same. We therefore show that the signal-to-noise of the special numerical representations constructed by orthogonal or row orthogonal transformations only depend upon the mathematical construction of representations, do not relate to the chemical or biological meanings of representations
\small

\section*{References}

\bibliographystyle{elsarticle-harv}
\bibliography{../References/myrefs}

\begin{thebibliography}{15}
\expandafter\ifx\csname natexlab\endcsname\relax\def\natexlab#1{#1}\fi
\expandafter\ifx\csname url\endcsname\relax
  \def\url#1{\texttt{#1}}\fi
\expandafter\ifx\csname urlprefix\endcsname\relax\def\urlprefix{URL }\fi

\bibitem[{Anastassiou(2001{\natexlab{a}})}]{A2001}
Anastassiou, D., 2001{\natexlab{a}}. Genomic signal processing. IEEE Signal
  Processing Magazine 18, 8--20.

\bibitem[{Anastassiou(2001{\natexlab{b}})}]{anastassiou2001genomic}
Anastassiou, D., 2001{\natexlab{b}}. Genomic signal processing. Signal
  Processing Magazine, IEEE 18~(4), 8--20.

\bibitem[{Dodin et~al.(2000)Dodin, Vandergheynst, Levoir, Cordier, and
  Marcourt}]{Dodin2000}
Dodin, G., Vandergheynst, P., Levoir, P., Cordier, C., Marcourt, L., 2000.
  Fourier and wavelet transform analysis, a tool for visualizing regular
  patterns in dna sequences. J. Theor. Biol. 206, 323--326.

\bibitem[{Jiang et~al.(2008)Jiang, Lavenier, and Yau}]{jiang2008coding}
Jiang, X., Lavenier, D., Yau, S. S.-T., 2008. Coding region prediction based on
  a universal dna sequence representation method. Journal of Computational
  Biology 15~(10), 1237--1256.

\bibitem[{Kotlar and Y.~Lavener(2003)}]{Kotlar2003}
Kotlar, D., Y.~Lavener, Y., 2003. Gene prediction by spectral rotation measure:
  a new method for identifying protein-coding regions. Genome Res. 13,
  1930--1937.

\bibitem[{Silverman and Linsker(1986)}]{SILVERMAN1986}
Silverman, B.~D., Linsker, R., 1986. A measure of dna periodicity. J. Theor.
  Biol. 118, 295--300.

\bibitem[{Tiwari et~al.(1997)Tiwari, Ramachandran, Bhattacharya, Bhattachrya,
  and Ramaswamy}]{Tiwari1997}
Tiwari, S., Ramachandran, S., Bhattacharya, A., Bhattachrya, S., Ramaswamy, R.,
  1997. Prediction of probable genes by fourier analysis of genomic sequences.
  CABIOS 113, 263--270.

\bibitem[{Voss(1992)}]{VOSS1993}
Voss, R., 1992. Evolution of long-range fractal correlations and noise in dna
  base sequences. Phys. Rev. Lett 68, 3805--3808.

\bibitem[{Wang and Wang(2006)}]{WANG2006}
Wang, J., Wang, W., 2006. New 2-d graphical representation of dna sequences.
  Biophy. Rev. and Let. 1, 143--150.

\bibitem[{Wang and Johnson(2002)}]{wang2002computing}
Wang, W., Johnson, D.~H., 2002. Computing linear transforms of symbolic
  signals. Signal Processing, IEEE Transactions on 50~(3), 628--634.

\bibitem[{Yan et~al.(1998)Yan, Lin, and Zhang}]{ZHANG1998}
Yan, M., Lin, Z., Zhang, C., 1998. A new fourier transform approach for protein
  coding measure based on the format of z curve. Bioinformatics 14, 685--690.

\bibitem[{Yau et~al.(2003)Yau, Wang, Niknejad, Lu, Jin, and Ho}]{YAU2003}
Yau, S., Wang, J., Niknejad, A., Lu, C., Jin, N., Ho, Y., 2003. Dna sequence
  representation without degeneracy. Nucleic Acids Research 31, 3078--3080.

\bibitem[{Yin and Yau(2007{\natexlab{a}})}]{yin2007prediction}
Yin, C., Yau, S. S.-T., 2007{\natexlab{a}}. Prediction of protein coding
  regions by the 3-base periodicity analysis of a dna sequence. Journal of
  Theoretical Biology 247~(4), 687--694.

\bibitem[{Yin and Yau(2007{\natexlab{b}})}]{YIN2007}
Yin, C., Yau, S.-T., 2007{\natexlab{b}}. Prediction of protein coding regions
  by the 3-base periodicity analysis of a dna sequence. J. Theor. Biol. 247,
  687--694.

\bibitem[{Zhang and Zhang(1994)}]{ZHANG1994}
Zhang, R., Zhang, C.~T., 1994. Z curves, an intuitive tool for visualizing and
  analyzing the dna sequences. J. Biomol. Struct. Dyn. 11, 767--782.

\end{thebibliography}






\end{document}